\theoremstyle{definition}
\newtheorem{defn}{Definition}
\newtheorem{exmp}{Example}
\newtheorem{theorem}{Theorem}
\newtheorem{lm}{Lemma}
\newtheorem{prop}{Proposition}
\newtheorem{cor}{Corollary}
\newcommand{\F}{{\mathbb F}}
\newcommand{\rmv}[1]{}
\DeclareMathOperator{\indeg}{indeg}
\DeclareMathOperator{\outdeg}{outdeg}
\DeclareMathOperator{\In}{In}
\DeclareMathOperator{\Out}{Out}
\definecolor{OliveGreen}{rgb}{0.0, 0.6, 0.0}
\title{Codes for distributed storage from 3–regular graphs\thanks{ The authors were partially supported by  the National Science Foundation
    under grants DMS-1403062, DMS-1547399 and CCF-1407623.}
}
\author{Shuhong Gao}
\author{Fiona Knoll}
\author{Felice Manganiello}
\author{Gretchen Matthews}
\affil{Clemson University}
\begin{document}

\maketitle

\begin{abstract}
  This paper considers distributed storage systems (DSSs) from a
  graph theoretic perspective. A DSS is constructed by means of the
  path decomposition of a $3$-regular graph into $P_4$ paths. The
  paths represent the disks of the DSS and the edges of
  the graph act as the blocks of storage. We deduce the properties of the DSS from a
  related graph and show their optimality.
\end{abstract}

{\bf Keywords} -- Distributed storage systems; 3-regular graphs; 4-regular
  graph; P4 paths; decomposition of 3-regular graphs; linear codes
  from graphs.

{\bf MSC 2010 Codes} -- 05C90; 68R10; 94C15; 94A05.

\section{Introduction}

With the rapid increase of cloud storage, the demand for large-scale data storage rises. However, for both solid-state drives (SSD) and hard disk drives (HDD), the lifetime of the drive is variable. In many cases, if one drive fails, multiple drives fail. In the case of a disk failure, it is desirable to both replace the disk and recover the lost data. 

A distributed storage system (DSS) is a collection of $n$ disks such that the encoded files can be recovered by any $k$ of the disks, $k<n$.  
Given a disk failure, the system repairs the failure by accessing the lost data from other drives and writing the data on the replacement disk, called the newcomer. There are three types of repairs: exact, functional, and exact repair of systematic parts. Exact repair occurs when the data is exactly recovered, whereas functional repair only requires that the encoded files can still be obtained from any $k$ of the $n$ disks. Consequently, with functional repair, the recovered data may not be the same as the original. Exact repair of systematic parts is a combination of functional and exact repair;  it requires the system to contain a copy of the uncoded data and exactly repairs the uncoded data while allowing for functional repair of all other data. 

Recovery of lost data can be achieved by the use of redundancy bits. One form of redundancy consists of replicating the data. Another form involves erasure codes, which given a fixed number of redundancy bits allows for more failures than replication. There are two failure cases that we consider: bit failures within a disk and whole disk failures. The repairing of the whole disk requires external communication, meaning communication between disks. To repair bit failures within a disk, both external and internal communication, meaning communication within the disk, may be used. The amount of data transferred between disks during the repair is the repair bandwidth. Due to the upper limit of the network capacity, it is desirable to minimize the repair bandwidth in order to avoid reaching this limit. However, there is an inverse relationship between the repair bandwidth and the number of redundancy bits. 

Current distributed data storage systems utilize MDS codes \cite{CORE}. Let $\F_q$ be the field of $q$ elements, where $q$ is a power of a prime. An $[n,k,d]$ code $C$ over $\F_q$ is a linear subspace of $\F_q^n$ with dimension $k$ and minimum distance $d$.  The minimum distance $d(C)$ of a code $C$ is defined as 
	\[d(C):= \min\{ d(x,y): x,y \in C \quad \text{and} \quad x\neq y\}\]
where $d(x,y):=| \left\{ i : x_i \neq y_i \right\}|$ denotes the usual Hamming distance between $x,y \in \F_q^n$. 
A maximum distance separable (MDS) code is a linear code which achieves the Singleton bound, meaning $k=n-d+1$; otherwise, $k\leq n-d+1$. The rate of a code $C$ is defined to be $r= \frac{k}{n}$, and $n-k$ is the number of redundant bits. The parity check matrix $H$ of a code $C$ is an $(n-k) \times n$ matrix with rank $n-k$ such that $Hc=0$ for all codewords $c \in C$. Note that each row of $H$ consists of the coefficients of a parity check equation; that is, for row $h_i$ of $H$, $ 1 \leq i \leq n-k$,
	\[\sum_{j = 1}^n h_{ij}c_j \mod q =  0\]
for all $c \in C$. The code $C$ is a locally recoverable code with locality $r$ if and only if a received word $w$, coordinate $w_i$ can be recovered by accessing $w_{i_1}, \dots, w_{i_r}$ rather than all other coordinates of the received word $w_j$ where $j \neq i$ \cite{Local}.

The codes with the highest rate, i.e., the smallest number of redundancy bits, are MDS codes. Hence, a system with an underlying MDS code can recover encoded files using any $k$ disks and has the optimal amount of redundancy bits,  $n-k$ disks, but it requires a larger bandwidth than may be desired. There has been an ample amount of research to form a system using MDS codes which minimizes the recovery bandwidth with the minimal amount of storage or minimizes the storage space with the minimal bandwidth \cite{NetworkCoding, MSR, Rashmi, DiagonalParity, Suh}. In this paper, we focus on minimizing the bandwidth by considering codes which are not MDS from a graph theoretic viewpoint. 

A graph $G=(V,E)$ consists of a set of vertices $V$ and a set of edges $E$. A path $P_n$ is a sequence of $n$ distinct vertices $v_1, \dots, v_n$ such that $v_i$ and $v_{i+1}$ are adjacent for all $1 \leq i \leq n-1$. A cycle $C_n$ is a sequence of $n$ distinct vertices $v_1, \dots, v_n$ such that $v_i$ and $v_{i+1}$ are adjacent for all $1 \leq i \leq n-1$ and $v_n$ is adjacent to $v_1$. The girth $g$ of a graph $G$ is the length of the shortest cycle of graph $G$. The degree of a vertex $v\in G$, denoted $deg(v)$, is the number of edges incident to vertex $v$. If graph $G$ is a digraph, a directed graph, then the degree of a vertex $v$ is split into two parts: the outdegree, $\outdeg(v)$, which is the number of outgoing edges incident to vertex $v$; and the indegree, $\indeg(v)$, which is the number of incoming edges incident to vertex $v$. If every vertex $v \in G$ has degree $t$, then $G$ is called a $t$-regular graph. 

A connected graph $G$ with $n$ nodes and $m$ edges defines a $q$-ary code  $C_E(G)$ of length $m$ as follows.  Coordinate positions are associated with the edges of $G$, and the code $C_E(G)$ is spanned by vectors $w$ which correspond to spanning subgraphs of $G$, meaning those $w \in \F_q^n$ with  $$w_e= \begin{cases} 1 & e \in E(S) \\ 0 & \mathrm{otherwise} \end{cases}$$ 
for some spanning subgraph $S$ of $G$. The code $C_E(G)$ is $[m,m-n+1,g]$ code where $g$ is the girth of $G$ \cite[Theorem 10.11.5]{GraphBook}.

In this paper, we consider a distributed data storage system as a graph. A storage disk with $r$ blocks will be considered as a set of $r$ edges, where an edge represents a block of the storage disk and the vertices of the graph act as parity checks. We consider graphs which can be covered with paths on 4 vertices, and each such path is considered to be a disk.  In this model, we seek quick recovery of a disk with the minimum amount of communication on the graph.  This setup is detailed carefully in the next section. In Section \ref{section:SystemProps}, we study properties of this DSS, which is further demonstrated with an example Section \ref{section:ex}. Concluding discussion is in Section \ref{section:conclusion}.

\section{Set up of the Distributed Storage System}

\subsection{DSSs from $3$-regular graphs}
As was previously mentioned, we consider a distributed data storage
system as a graph. A storage disk with $r$ blocks is a set of $r$
edges, meaning each edge represents a block of the storage disk, and the
vertices of the graph act as parity checks. More specifically, we
consider disks to be comprised of three blocks, each block consisting of
an element from the field $\F_q$. In addition, we restrict the graphs
to the class of $3$-regular. Then the following theorem, proved independently by \cite{Kotzig} and \cite{BF}, may be applied.

\begin{theorem} \label{Theorem: PathsDecomp} \cite[Theorem 2.2]{PathsDecomp} Every $3$-regular graph can be
decomposed into disjoint $P_4$s, paths on three edges.
\end{theorem}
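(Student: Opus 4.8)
The plan is to treat the decomposition as an \emph{edge} partition and to reduce immediately to the connected case, since decomposing each connected component separately yields a decomposition of the whole graph. So assume $G$ is a connected $3$-regular graph on $n$ vertices. Summing degrees gives $3n = 2|E(G)|$, so $n$ is even and $|E(G)| = 3n/2$. Since a $P_4$ uses exactly three edges, any decomposition must consist of exactly $n/2$ paths; thus the claim is equivalent to partitioning the $3n/2$ edges into $n/2$ paths of length three. I would first record the local constraint this forces. At each vertex the three incident edges are split as $3 = 2a + b$ among the paths meeting it, where $a$ counts the paths having that vertex as an interior vertex and $b$ the paths having it as an endpoint; the integer solutions are $(a,b) = (0,3)$ and $(a,b) = (1,1)$. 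Since each $P_4$ has exactly two interior vertices, summing gives $\sum_v a_v = 2\cdot(n/2) = n$ over $n$ vertices with each $a_v \in \{0,1\}$, which forces $a_v = 1$ everywhere. Hence in any valid decomposition \emph{every} vertex is interior to exactly one $P_4$ and an endpoint of exactly one $P_4$, and this is the bookkeeping the construction must realize.

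For the construction I would exploit the two-factor structure of cubic graphs. When $G$ is bridgeless, Petersen's theorem supplies a perfect matching $M$, and $G - M$ is a $2$-regular spanning subgraph, that is, a disjoint union of cycles carrying all $n$ edges of the two-factor. The ratio of cycle edges to matching edges is $n : n/2 = 2 : 1$, which suggests building each $P_4$ from two consecutive cycle edges plus one matching edge: a subpath $x - y - z$ lying on a cycle (with $y$ interior) extended across the matching edge $z - w$ at one end to the $P_4$ given by $x - y - z - w$. If every cycle of the two-factor has even length, each cycle splits cleanly into length-two paths, and one then needs to attach the $n/2$ matching edges bijectively, one to each such subpath, so that the attachment never revisits a vertex (in particular the endpoints of a chosen subpath must not be matched to each other, else the $P_4$ would close into a triangle). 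I expect to set this up as a system-of-distinct-representatives or Hall-type argument, or equivalently by orienting each cycle and attaching the matching edge at the head of each subpath, forcing the required bijection.

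The difficulties, which I expect to be the main obstacle, are exactly the two ways this clean picture fails. First, a two-factor may contain \emph{odd} cycles, which do not split into length-two paths; to absorb the leftover edge one must allow the other admissible patterns at a vertex. Note that since matching edges are pairwise vertex-disjoint, no $P_4$ can use two consecutive matching edges, so with $F$ a cycle edge and $M$ a matching edge the usable path patterns are only $FFF$, $FFM$, $FMF$, and $MFM$; these must be mixed locally so that the per-vertex count $(a,b)=(1,1)$ still holds everywhere. Second, not every cubic graph is bridgeless, and a cubic graph with bridges need not have a perfect matching at all, so Petersen's theorem does not apply directly. For this case I would fall back on induction on $|V(G)|$: excise a suitably chosen $P_4$, whose four affected vertices drop to degrees $2,1,1,2$, reconnect them so as to obtain a strictly smaller cubic \emph{multigraph}, apply the inductive hypothesis there, and then lift the decomposition back by re-routing paths through the excised structure. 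Guaranteeing that the reduction always yields a strictly smaller cubic multigraph, and controlling the lift so that no path degenerates into a closed length-three trail (a triangle rather than a $P_4$), is the delicate part; allowing multigraphs throughout keeps the induction self-contained, with the small base cases such as $K_4$ checked directly.
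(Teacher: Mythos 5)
The paper never proves this statement---it is quoted from the literature (Kotzig, Bouchet--Fouquet, and Heinrich--Liu--Yu)---so your attempt can only be judged on its own merits, and it contains one fatal flaw. Your fallback plan for cubic graphs without a perfect matching cannot be repaired, because the statement is actually \emph{false} for such graphs, and your own bookkeeping already proves this: you correctly derived that in any $P_4$-decomposition every vertex has $(a_v,b_v)=(1,1)$, i.e.\ is interior to exactly one path. One more line of that count finishes the job: if $v$ is interior to the path $a-b-c-d$ (say $v=b$), then $v$ is an endpoint of that path's middle edge $bc$, so the $n/2$ middle edges cover every vertex exactly once and form a perfect matching. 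Hence a cubic graph admits a $P_4$-decomposition \emph{only if} it has a perfect matching; this is Kotzig's actual theorem, which is an equivalence, and connected cubic graphs with no perfect matching exist (e.g., the $16$-vertex graph with a central vertex joined to three $5$-vertex gadgets: deleting the center leaves three odd components, violating Tutte's condition). So the theorem as printed in the paper is overstated---its correct scope is cubic graphs with a perfect matching, which includes all bridgeless ones by Petersen's theorem---and your proposed induction on multigraphs is chasing a false claim. (The paper's later use of the theorem is unaffected, since its graphs $\mathcal{G}$ are built together with an explicit disk decomposition.)

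In the perfect-matching case your construction also picks the hard pattern. You attach the matching edge at the \emph{end} of two consecutive cycle edges (your $FFM$ pattern), which is what creates the odd-cycle obstruction and the Hall-type assignment problem you flag. Use your $FMF$ pattern uniformly instead: with $M$ a perfect matching and $F=G-M$ the $2$-factor, orient each cycle of $F$ and write $v^{+}$ for the successor of $v$; for each matching edge $uv\in M$ take the path $u^{+}-u-v-v^{+}$. The four vertices are distinct ($u^{+}\neq v$ and $v^{+}\neq u$ since $uv\in M$ is not a cycle edge of the simple graph; $u^{+}\neq v^{+}$ since every vertex has in-degree $1$ in the oriented $2$-factor), each cycle edge is the out-edge of exactly one vertex and so lies in exactly one path, and each matching edge is used once, giving $n/2$ paths covering all $3n/2$ edges. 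Odd cycles require no special treatment and no system-of-distinct-representatives argument is needed; this one-paragraph orientation trick, combined with your counting lemma for necessity, is the complete and correct form of the theorem.
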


Let $\mathcal{G}$ be a $3$-regular graph.  Then $\mathcal{G}$ can be decomposed into $P_4$s, according to Theorem \ref{Theorem: PathsDecomp}. Each $P_4$ given by the decomposition represents a storage disk of the DSS. Note that the word disk can refer to either the 
physical device or to the $P_4$ path representing the physical device; the use of the word will be clear from the context.

For each vertex $v$ of the $3$-regular graph $\mathcal{G}$, any two incident edges determine a third edge which is also incident to $v$. Hence, the code obtained from the graph is a locally recoverable code with locality 2. This is demonstrated in the next example.

\begin{exmp}
Consider the Petersen graph $\mathcal{G}$ in Figure \ref{figure:petersengraph1}, a $3$-regular graph on 10 vertices with girth 5. By Theorem 
\ref{Theorem: PathsDecomp}, $\mathcal{G}$ can be decomposed into five disjoint $P_4$ paths. For instance, $\mathcal{G}$ can be decomposed into the five paths of three edges
as given in Figure \ref{figure:disks}.

\begin{figure}[htb]
  \centering

  \begin{subfigure}[b]{0.43\textwidth}
    \centering
    \includegraphics[scale=.65]{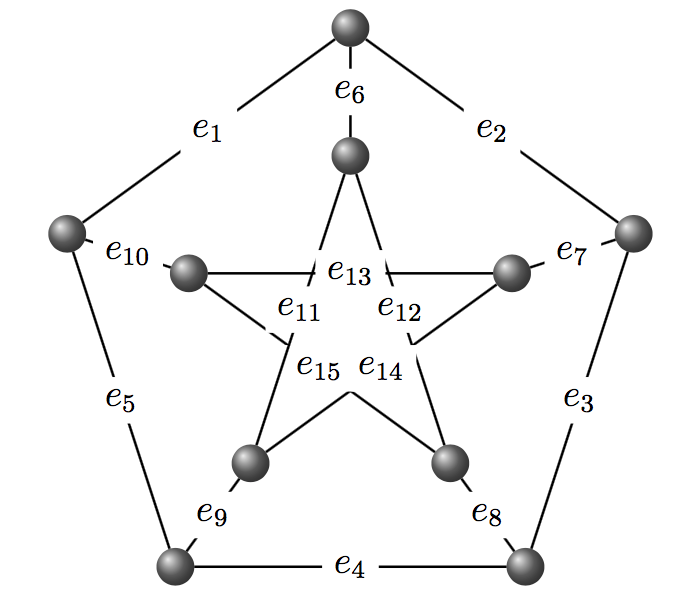}
		\caption{\label{figure:petersengraph1}}

                      \end{subfigure}
                      \begin{subfigure}[b]{0.43\textwidth}
                      \begin{center}
			\begin{tabular}{c|c|c|c|} \cline{2-4}
				 Disk 1: & $e_1$ & $e_{5}$ & $e_{6}$ \\ \cline{2-4}
				 Disk 2: & $e_7$ & $e_{10}$ & $e_{13}$ \\ \cline{2-4}
				 Disk 3: & $e_4$ & $e_9$ & $e_{14}$ \\ \cline{2-4}
				 Disk 4: & $e_2$ & $e_3$ & $e_8$ \\ \cline{2-4}
				 Disk 5: & $e_{11}$ & $e_{12}$ & $e_{15}$ \\ \cline{2-4}
			\end{tabular}  
                       \vspace{3cm}
                        \caption{\label{figure:disks}}
                      \end{center}
                      \end{subfigure}
                      \caption{Petersen graph.}
\end{figure}

Suppose edge $e_5$ is erased. Then one can recover this lost edge by one of the following parity checks:
	\[e_1+e_5+e_{10}=0 \quad \text{or} \quad e_4+e_5+e_9=0.\]
\rmv{
$$
\begin{array}{ccccccc}
          e_1&+&e_5&+&e_{10} &=&0 \\
          e_4&+&e_5&+&e_9 &= &0.
\end{array}
$$	
}

\end{exmp}

Given a distributed data storage system and its associated $3$-regular
graph $\mathcal{G}$, one goal is to maximize the number of disks that
can be repaired.  The restriction to $3$-regular graphs
  allow us to maximize the repair capability of the system while maintaining
  a small communication bandwidth. Note that the path
  decomposition of graphs with higher regularity is in general still
  an open problem in combinatorics \cite{Odile}.

We now consider the relationship between unrecoverable disks and cycles of the $3$-regular graph $\mathcal{G}$ representing the system.

\begin{theorem} \label{Theorem: ErasureCycle}
An edge of a graph $\mathcal{G}$ is not recoverable if and only if it is contained in a cycle of erased edges.
\end{theorem}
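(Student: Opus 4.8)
The plan is to recast the combinatorial statement in the language of the cycle code $C_E(\mathcal{G})$, whose parity checks are exactly the vertex-incidence relations recalled above: a vector $d \in \fq^{m}$ satisfies every parity check precisely when it is a \emph{circulation}, i.e.\ when at each vertex the (oriented) sum of the values on the incident edges is zero. Fix the set $F \subseteq E(\mathcal{G})$ of erased edges together with an erased edge $e \in F$. By the standard erasure criterion, $e$ is \emph{not} recoverable exactly when two codewords can agree on all of $E(\mathcal{G}) \setminus F$ yet differ at $e$; equivalently, when there is a nonzero codeword $d \in C_E(\mathcal{G})$ with $\mathrm{supp}(d) \subseteq F$ and $d_e \neq 0$. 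The theorem is therefore equivalent to the claim that such a $d$ exists if and only if $e$ lies on a cycle all of whose edges belong to $F$.

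First I would handle the easy direction. If $e$ lies on a cycle $C$ with $E(C) \subseteq F$, orient $C$ and let $d$ assign $\pm 1$ around $C$ and $0$ elsewhere. Then $d$ is a nonzero circulation, so $d \in C_E(\mathcal{G})$, with $\mathrm{supp}(d) = E(C) \subseteq F$ and $d_e \neq 0$; hence $e$ is not recoverable.

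For the converse I would show that the support subgraph $S = \mathrm{supp}(d)$ of any nonzero circulation is bridgeless. Suppose some $e = uv \in S$ were a bridge of $S$, and let $A$ be the vertex set of the component of $u$ in $S \setminus e$. Summing the vertex parity checks over all $w \in A$, every edge of $S$ with both endpoints in $A$ cancels and every non-support edge contributes $0$, so the only surviving term is the contribution $\pm d_e$ of $e$, the unique edge of $S$ leaving $A$; this forces $d_e = 0$, a contradiction. Since an edge is a bridge precisely when it lies on no cycle, every edge of $S$ — in particular our $e$ with $d_e \neq 0$ — lies on a cycle contained in $S \subseteq F$, which is a cycle of erased edges.

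I expect the converse, and specifically this cut computation, to be the crux: the bookkeeping must be arranged so that the chosen orientations make all interior edges cancel while the single crossing edge $e$ survives. Over $\fq$ with $q = 2$ one could instead note that $S$ is an even subgraph and hence decomposes into edge-disjoint cycles, but phrasing the argument through circulations has the advantage of working uniformly over every field $\fq$.
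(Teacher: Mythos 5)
Your proof is correct, but it takes a genuinely different route from the paper's. The paper argues combinatorially via a peeling procedure: for the forward direction it grows a path of unrecoverable erased edges through $e$, observing that an endpoint of such a path would sit at a vertex whose parity check has a single unknown, so the path must close into a cycle; for the converse it notes that each vertex of an erased cycle in the $3$-regular graph $\mathcal{G}$ has two erased incident edges, so no check can fire. You instead invoke the standard information-theoretic erasure criterion (a coordinate is unrecoverable iff some nonzero codeword is supported in the erased set and is nonzero at that coordinate) and prove the key structural fact that the support of a nonzero circulation is bridgeless; your cut-summation computation is sound, since the bridge is the unique support edge crossing the cut and so survives the summation, forcing $d_e=0$. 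Each approach buys something: your converse is actually tighter than the paper's, because exhibiting a codeword confusable with zero rules out recovery by \emph{any} decoder, whereas the paper's argument literally only rules out recovery by single local parity checks; conversely, the paper's forward direction is constructive --- the peeling argument yields the sequential, low-bandwidth repair procedure that the rest of the paper depends on, while your argument certifies recoverability only in principle (by solving the global linear system), though peeling recoverability does follow from your acyclicity conclusion by iterated leaf removal. One caveat worth flagging: your ``uniformly over every field $\fq$'' claim is true for the oriented (circulation) code, but the paper's parity checks are \emph{unsigned} sums of incident edges, as in $e_1+e_5+e_{10}=0$; the two kernels agree in characteristic $2$ but differ in odd characteristic (an erased odd cycle is then solvable from the unsigned checks), so for odd $q$ your formulation quietly amends, rather than reproduces, the paper's setting.
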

\begin{proof}
($\Rightarrow$) Let $\mathcal{G}$ be a $3$-regular graph. Suppose edge $e_i$ is not recoverable. Then for each vertex $v$ incident with the edge, there is at least one additional non-recoverable edge incident with $v$; otherwise, $e_i$ may be recovered via parity check. Similarly, for each of these non-recoverable edges, additional incident edges must be non-recoverable. Hence, there is a path of non-recoverable edges, say
	\[P: e_1 - e_2 - \dots - e_i - \dots - e_t.\]
Suppose the path $P$ is not a cycle. Then there must be an end vertex of $P$ with two known edges. As a result, one of the non-recoverable edges, say $e_t$, may be recovered, which is a contradiction. Hence, $P$ is a cycle and the edge $e_i$ is contained in a cycle of erased edges.

($\Leftarrow$) Suppose a cycle in $\mathcal{G}$ has been erased. Since each vertex has only three incident edges and two of these lie in the cycle, no edge of the cycle can be recovered.
\end{proof}

\begin{cor}
 A set of erased disks is not recoverable if and only if the corresponding edges form a cycle. Hence, a system can repair any $m$ disk erasures if the girth of the associated $3$-regular graph is at least $m+1$. 
\end{cor}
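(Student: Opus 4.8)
The plan is to treat the corollary's two assertions in turn, obtaining the equivalence directly from Theorem~\ref{Theorem: ErasureCycle} and then deducing the girth bound from it. Let $S \subseteq E(\mathcal{G})$ be the set of all edges lying on the $m$ erased disks, so that $|S| = 3m$. By definition the erased configuration fails to be recoverable exactly when some edge of $S$ cannot be recovered. I would then quote Theorem~\ref{Theorem: ErasureCycle}: an erased edge is not recoverable if and only if it lies on a cycle all of whose edges are erased, that is, a cycle contained in $S$. Chaining the two statements, $S$ is unrecoverable if and only if $S$ contains the edge set of a cycle of $\mathcal{G}$, which is what ``the corresponding edges form a cycle'' means. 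This first half is bookkeeping on top of the preceding theorem and should present no difficulty.

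For the ``hence'' I would argue by contraposition: assuming the girth $g$ of $\mathcal{G}$ satisfies $g \geq m+1$, I must show that no set of $m$ erased disks is unrecoverable, which by the first half amounts to showing that the edge set $S$ of any $m$ disks contains no cycle. Since every cycle of $\mathcal{G}$ has length at least $g \geq m+1$, it is enough to rule out a cycle of length at least $m+1$ inside $S$. A bare edge count is not decisive here, because $S$ contains $3m \geq m+1$ edges; the $P_4$-decomposition structure has to be exploited.

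The crux, and the step I expect to be the main obstacle, is to control how the edges of a cycle $C \subseteq S$ distribute among the disks, bounding the number of distinct disks that $C$ meets from below in terms of the length of $C$. The available structural fact is that each vertex of the $3$-regular graph is internal to at most one $P_4$ of the decomposition; hence at a vertex of $C$ the two incident cycle-edges either share a disk (possible only when that vertex is the middle vertex of the $P_4$) or belong to two different disks, forcing a transition. I would walk around $C$ recording these transitions, aiming to conclude that $C$ necessarily meets more than $m$ distinct disks and therefore cannot be carried by the edges of only $m$ erased disks, completing the contrapositive. The delicate issue is that one $P_4$ supplies a path of up to three consecutive edges, so $C$ may run along several edges of a single disk before leaving it; pinning down this per-disk contribution tightly enough to reach the threshold $m+1$ is precisely where the argument must do its real work.
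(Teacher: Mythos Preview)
Your treatment of the first assertion is correct: unrecoverability of the erased edge set $S$ is equivalent to $S$ containing a cycle, by Theorem~\ref{Theorem: ErasureCycle}. The paper offers no proof of the corollary; it is stated as an immediate consequence of that theorem.

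Your approach to the second assertion, however, cannot be made to work, because the assertion as literally written does not hold. The transition count you propose gives only that a cycle $C \subseteq S$ of length $\ell$ meets at least $\lceil \ell/3 \rceil$ distinct disks, since a single $P_4$ may contribute up to three consecutive edges to $C$; with $\ell \geq g \geq m+1$ this yields merely $\lceil (m+1)/3 \rceil$ disks, far short of the $m+1$ you need. This is not a deficiency of your argument but of the statement. The Petersen graph has girth $5$, yet by the paper's own analysis in Section~\ref{section:ex} it recovers only any two disk erasures: its underlying $4$-regular graph $K_5$ has girth $3$, so by Proposition~\ref{Prop: cycle} a $3$-disk cycle exists in $\mathcal{G}$, and hence some set of $m=4$ (indeed, already $3$) erased disks is unrecoverable despite $g=5 \geq m+1$.

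The correct disk-erasure bound appears later as Theorem~\ref{4regthm} and is governed by the girth of the $4$-regular graph $G$, not that of $\mathcal{G}$. The second sentence of the corollary should be read either as a statement about $m$ \emph{edge} (block) erasures---where it is immediate, since $m$ edges cannot contain a cycle of length at least $m+1$---or with $m+1$ replaced by $3m+1$. Your suspicion that pinning down the per-disk contribution would be the crux was well founded; the bound simply cannot be tightened to what the literal statement requires.
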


Motivated by the previous corollary, we next consider how to construct $3$-regular graphs with predetermined disk assignments from those that are $4$-regular.

\subsection{$3$-regular graphs with disk assignments from $4$-regular graphs}

Let $G$ be a $4$-regular graph on $N$ vertices. Then $G$
is Eulerian, since all vertices of $G$ have even degree. An Eulerian tour on $G$ 
prescribes a direction to each edge of $G$ as well as an ordering on the vertices of
$G$. Denote by $G_d=(V=\{v_1,\dots,v_N\},E)$ the directed
graph obtained from $G$ in this way.  The pair $(v_i,v_j)\in E$ denotes an edge from $v_i$
to $v_j$. Figure
\ref{fig:A} and \ref{fig:B} provides an example of a graph
$G$ with the corresponding digraph $G_d$.
Notice that $G_d$ is a $2$-regular directed graph, meaning
for each vertex $v \in V$, $\indeg(v)=\outdeg(v)=2$. The set of edges
incident to $v\in V$ are then 
$\In(v)=\{(v_{i_1},v), (v_{i_2},v)\}$ and
$\Out(v)=\{(v,v_{j_1}), (v,v_{j_2})\}$. It will be convenient to define functions $\min$ and $\max$ as follows:
$$\min \In(v) := \left( v_{\min\{ i_1,i_2\}}, v \right)$$ and $$\max \In(v) := \left( v_{\max \{ i_1,i_2\}}, v \right), $$
with analogous definitions for $\min \Out(v)$ and $\max \Out(v)$; that is, $\min$ and $\max$ are taken with respect to the subindices of the vertices.

\medskip
From $G_d$ we define an undirected graph $\mathcal{G}=(\mathcal{V}, \mathcal{E})$
where  $\mathcal{V}=E$, meaning that $V(\mathcal{G})=E(G_d)$, i.e., the vertices of $\mathcal{G}$ are
the edges of $G_d$, and $\{(v_{\ell_1},v_{m_1}),(v_{\ell_2},v_{m_2})\}\in
\mathcal{E}$ if and only if $(v_{\ell_1},v_{m_1})\neq (v_{\ell_2},v_{m_2})$, and
\begin{itemize}
\item either $v_{m_1}=v_{m_2}$, or 
\item $v_{\ell_r}=v_{m_s}=v$ for $r\neq s$ and
  \begin{itemize}
  \item either $(v_{\ell_s},v)=\min \In(v)$ and $(v,v_{m_r})=\min
    \Out(v)$, or 
  \item $(v_{\ell_s},v)=\max \In(v)$ and $(v,v_{m_r})=\max
    \Out(v)$.
  \end{itemize}

\end{itemize}

 Notice that the $\min$/$\max$ step of the construction of the graph
  $\mathcal{G}$ is a choice made for ease of notation. The
  properties of the distributed storage system that will be
  constructed are independent of this choice as will be explored in Section~\ref{section:SystemProps}.

\begin{exmp} 
Consider the $4$-regular complete bipartite graph $K_{4,4}$. Here, $G:=K_{44}$ is a graph with $N=8$ vertices and $2N=16$ edges. Figure \ref{fig:B} shows an Eulerian tour on $G$. By using the resultant directed graph $G_d$, we obtain the following set of edges
$$	 E = \left\{ 
\begin{array}{l}
(v_1,v_2), (v_1,v_6), (v_2,v_3),(v_2,v_7), (v_3,v_4),(v_3,v_8), (v_4, v_1), (v_4,v_5),\\(v_5,v_2), (v_5,v_6), (v_6,v_3),(v_6,v_7),  (v_7,v_4), (v_7,v_8),(v_8,v_1), (v_8,v_5) \end{array} \right\}.$$ 
This is the set of vertices for the associated $3$-regular graph $\mathcal{G}$ also, meaning  $V(\mathcal{G})=\mathcal{V}= E$. A set of neighboring vertices for the vertex $(v_1,v_2)$ is
	\[\{(v_4,v_1), (v_5,v_2), (v_2,v_3)\}\]
which comes from choosing the minimum $\mbox{In}(v_1)$ and the minimum $\mbox{Out}(v_2)$. Similarly, we may obtain a set of paths of length 3:

\bigskip
\begin{minipage}[h]{.49\linewidth}
 $ D_1: (v_1,v_2) - (v_4,v_1)- (v_8,v_1) - (v_1,v_6)$\\
 $ D_2: (v_2, v_3) - (v_1,v_2) - (v_5,v_2) - (v_2, v_7)$\\
 $ D_3: (v_3,v_4) - (v_2,v_3)- (v_6,v_3) - (v_3, v_8)$\\
 $ D_4: (v_4, v_1) - (v_3,v_4) - (v_7,v_4) - (v_4,v_5)$\\
\end{minipage}\hfill
          \begin{minipage}[h]{.49\linewidth}
            $D_5: (v_5, v_2) - (v_4,v_5) - (v_8,v_5) - (v_5, v_6)$\\
            $ D_6: (v_6, v_3) - (v_1,v_6) - (v_5,v_6) - (v_6, v_7)$\\
            $ D_7: (v_7, v_4) - (v_2,v_7) - (v_6,v_7) - (v_7,v_8)$\\
            $ D_8: (v_8, v_1) - (v_3, v_8) - (v_7,v_8) - (v_8, v_5)$.\\

        \end{minipage}

Again, note that the graph $\mathcal{G}$ is a $3$-regular graph.

\begin{figure}[htb]
  \centering

  \begin{subfigure}[b]{0.43\textwidth}
    \centering
    \includegraphics[scale=.65]{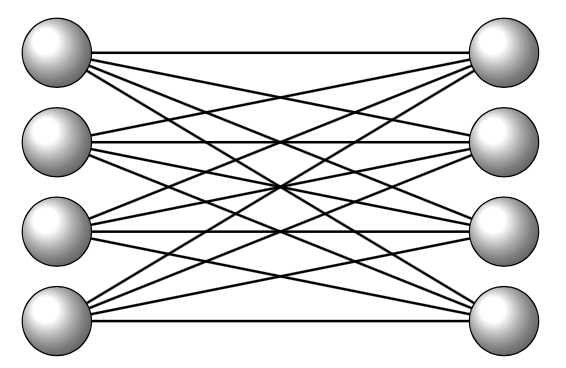}
    
      
        


        

    \caption{Graph $G$}
    \label{fig:A}
  \end{subfigure}
  \begin{subfigure}[b]{0.43\textwidth}
    \centering
    \includegraphics[scale=.65]{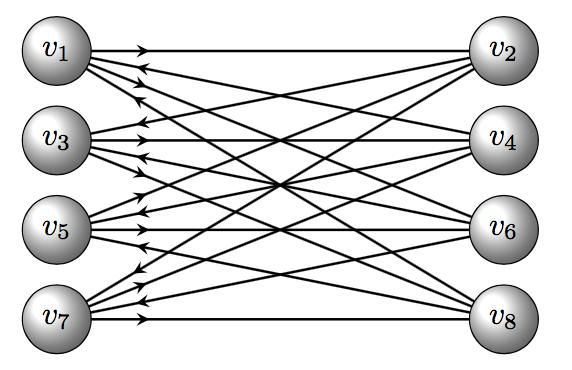}
    
      
        


        
    \caption{Graph $G_d$}
    \label{fig:B}
  \end{subfigure}

  \begin{subfigure}[b]{\textwidth}
    \centering

    \includegraphics[scale=.65]{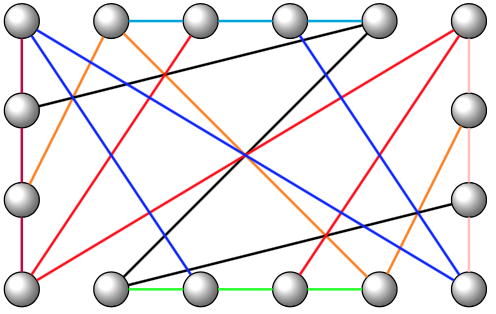}
    \caption{Graph $\mathcal{G}$}
    \label{fig:C}
  \end{subfigure}

\caption{Graph relations.\label{Figure: K44}}
\end{figure}

\end{exmp}

\begin{lm}
  The graph $\mathcal{G}$ obtained from the $4$-regular graph $G$ is $3$-regular.
\end{lm}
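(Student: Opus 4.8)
The plan is to fix an arbitrary vertex of $\mathcal{G}$—that is, an arbitrary edge $e=(a,b)$ of $G_d$—and to show that it has exactly three neighbors, independently of the choice of $e$. Since $G$ is $4$-regular, the digraph $G_d$ is $2$-regular, so every vertex $v$ of $G_d$ satisfies $\indeg(v)=\outdeg(v)=2$; in particular $\In(v)$ and $\Out(v)$ each consist of exactly two (distinct) edges, and $\min\In(v)$, $\max\In(v)$, $\min\Out(v)$, $\max\Out(v)$ are all well defined.

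Next I would read off the neighbors of $e=(a,b)$ directly from the adjacency rule, sorting them into the three situations the rule permits. First, the ``same head'' clause: a neighbor of this type is any edge $e'\neq e$ with head $b$, and since $\indeg(b)=2$ there is precisely one, namely the second element of $\In(b)$. Second, the head-to-tail clause applied at the tail $a$ of $e$: here $e$ plays the role of an out-edge of $a$, so $e$ equals either $\min\Out(a)$ or $\max\Out(a)$, and the rule pairs it with $\min\In(a)$ or $\max\In(a)$ respectively—exactly one neighbor. Third, the same clause applied at the head $b$ of $e$: now $e$ is an in-edge of $b$, hence $\min\In(b)$ or $\max\In(b)$, and it is paired with $\min\Out(b)$ or $\max\Out(b)$—again exactly one neighbor. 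Thus $e$ has at most three neighbors, and each of the three mechanisms does produce one.

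It remains to check that these three candidate neighbors are pairwise distinct, which I expect to be the only delicate point. Writing them as $n_1=(\ast,b)$ (head $b$), $n_2=(\ast,a)$ (head $a$), and $n_3=(b,\ast)$ (tail $b$): one has $n_1\neq n_2$ unless $a=b$; $n_1\neq n_3$ unless there is a loop at $b$; and $n_2=n_3$ forces $n_2=n_3=(b,a)$, the reverse of $e$. All three coincidences, together with the degenerate possibilities $n_2=e$ or $n_3=e$, are ruled out once $G$ is a \emph{simple} graph, since then $G_d$ has neither loops (so $a\neq b$) nor anti-parallel pairs (so $(b,a)\notin E$ whenever $(a,b)\in E$). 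I would state this simplicity hypothesis explicitly; it is satisfied by the examples and is implicit in taking an Eulerian orientation of a simple $4$-regular graph. With distinctness in hand, every vertex of $\mathcal{G}$ has degree exactly $3$, so $\mathcal{G}$ is $3$-regular.

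As a consistency check I would also count edges globally: each vertex $v$ of $G_d$ contributes one same-head edge joining the two members of $\In(v)$, and two matching edges joining $\min\In(v)$ to $\min\Out(v)$ and $\max\In(v)$ to $\max\Out(v)$, giving $3N$ edges of $\mathcal{G}$ on $|\mathcal{V}|=|E(G_d)|=2N$ vertices; the handshake identity then yields average degree $6N/2N=3$, in agreement with the vertex-by-vertex count.
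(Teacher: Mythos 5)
Your proof is correct and follows essentially the same route as the paper's: classify the possible neighbors of a vertex $(v_\ell,v_m)$ of $\mathcal{G}$ by the three clauses of the adjacency rule and show each clause contributes exactly one neighbor, using $\indeg=\outdeg=2$ in $G_d$. Your explicit verification that the three neighbors are pairwise distinct (ruling out loops and anti-parallel pairs via simplicity of $G$) is a point the paper's terser argument leaves implicit, but it is a refinement of the same counting argument rather than a different approach.
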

\begin{proof}
Consider a vertex $(v_\ell, v_m) \in \mathcal{V}$. Then
	\[Nbd\left[(v_\ell, v_m)\right] \subseteq \left\lbrace(v_t,v_m), (v_m, v_s), (v_r, v_\ell) \right\rbrace  \subseteq \mathcal{V} \]
for some $t,r,s$.
Note for fixed $\ell$ and $m$, $\ell \neq m$, $|\{(v_t,v_m) \in \mathcal{V}, t\neq \ell\}| =1$ and $|\{(v_m,v_s) \in \mathcal{V}\}| =|\{(v_r,v_\ell) \in \mathcal{V}\}| =2$. The construction requires one from the first set and only one from the latter two sets. Hence, $|Nbd\left[(v_\ell, v_m)\right] | = 3$.

\end{proof}

Recall that the constructed $3$-regular graph $\mathcal{G}$ represents a distributed
storage system where the $N$ disjoint $P_4$ paths correspond to disks and the edges correspond to blocks of storage within the disks. Hence, any $4$-regular graph $G$ with $N$ vertices gives rise to a distributed storage system with $N$ disks, each comprised of three blocks of storage. 

Next, we examine the recovery properties of such a system.

\section{Properties of the Distributed Storage System} \label{section:SystemProps}

Let $G$ be a $4$-regular graph and $\mathcal{G}$ an associated $3$-regular graph
as described in the previous section. Recall that the disks are the 
  $P_4$s decomposing the $3$-regular graph $\mathcal{G}$.

\subsection{Recovery procedure and performance}

As explained earlier, the edges of the $3$-regular graph $\mathcal{G}$
correspond to information symbols and the vertices to parity check
equations defining a linear code of $\F_q^N$. Valid realizations of
the distributed storage system correspond to codewords of such a code.

Next, we construct a minimum distance decoder which is sequential in nature
and reduces the communication bandwidth based on the fact that disks
are paths.

\begin{lm}
  The required communication bandwidth  for recovering a disk of the graph
  $\mathcal{G}$ is 4 $\F_q$--symbols.
\end{lm}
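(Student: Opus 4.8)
The plan is to describe the sequential decoder explicitly on a single failed disk, count the symbols it reads, and then show the count cannot be lowered. Represent the failed disk as the $P_4$ path $a-b-c-d$ in $\mathcal{G}$, whose three blocks are the edges $\alpha=\{a,b\}$, $\beta=\{b,c\}$, $\gamma=\{c,d\}$; these are the erased $\F_q$-symbols. As $\mathcal{G}$ is $3$-regular, each of $a,b,c,d$ gives a parity check summing its three incident edges to zero. The endpoints $a$ and $d$ lie on a single path-edge apiece and hence carry two \emph{external} edges each, while the interior vertices $b$ and $c$ lie on two path-edges apiece and carry one external edge each, call them $f_b,f_c$. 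Since only this disk has failed, all external edges sit on functioning disks and their values can be read.

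First I would seed the recovery at the endpoint $a$, whose parity check contains $\alpha$ as its only erased edge: reading the two external symbols at $a$ returns $\alpha$ at a cost of two symbols. I would then propagate along the path. The check at $b$ is $\alpha+\beta+f_b=0$, so with $\alpha$ known, reading $f_b$ yields $\beta$; the check at $c$ is $\beta+\gamma+f_c=0$, so with $\beta$ known, reading $f_c$ yields $\gamma$. The decoder therefore reads $2+1+1=4$ symbols, and a short argument shows these four external edges are distinct --- $f_b$ or $f_c$ coinciding with an edge at $a$, or with each other, would force one of them to be a path-edge $\alpha,\beta,\gamma$, a contradiction --- so the count is exactly $4$.

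For optimality I would translate the usable checks into a linear system in $(\alpha,\beta,\gamma)$. Only the checks at $a,b,c,d$ touch an erased block, and their coefficient rows are $(1,0,0),(1,1,0),(0,1,1),(0,0,1)$. Recovering all three unknowns forces rank $3$, hence at least three of the four checks; the two interior rows alone span only rank $2$, so at least one endpoint check (worth two external reads) must be used. Tallying external reads over the four admissible triples, the chains $\{a,b,c\}$ and $\{b,c,d\}$ each cost $2+1+1=4$, whereas any triple using both endpoints costs $5$, and using more checks only adds reads since distinct checks touch disjoint external edges. Hence four symbols are both sufficient and necessary.

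I expect the lower bound, not the decoder, to be the main obstacle. The count of $4$ is immediate once the endpoint/interior degree split is noted, but ruling out a cheaper recovery requires care: one must confirm that the recovery cannot be seeded at an interior vertex (its check presents two erased unknowns at once) and that no accidental coincidence among external edges creates a shortcut. Here the construction of $\mathcal{G}$ from $G$ is what saves the argument, since the four external edges attached to a disk live on the distinct neighboring disks --- equivalently, the endpoints $a$ and $d$ of a disk are never adjacent in $\mathcal{G}$ --- so the degenerate configurations that would permit recovery in three reads simply do not occur.
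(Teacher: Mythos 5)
Your recovery scheme is exactly the paper's proof in different notation: the paper writes the four vertex checks along the disk, $e_1+f_1+f_2=0$, $e_1+e_2+f_3=0$, $e_2+e_3+f_4=0$, $e_3+f_5+f_6=0$, observes that any three determine $e_1,e_2,e_3$, and selects the first three so that only $f_1,f_2,f_3,f_4$ are accessed --- precisely your seed-at-an-endpoint-and-propagate decoder with cost $2+1+1$. What you add is the necessity half: the paper merely asserts that this choice minimizes bandwidth, while your rank computation on the rows $(1,0,0),(1,1,0),(0,1,1),(0,0,1)$ shows that any recovering set of vertex checks must include an endpoint check, that the two chains $\{a,b,c\}$ and $\{b,c,d\}$ cost $4$, and that endpoint--endpoint triples cost $5$. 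That analysis is correct for single-vertex checks; a fully airtight lower bound would also have to allow sums of checks (arbitrary cuts of $\mathcal{G}$), which neither you nor the paper addresses, but since the paper's own proof only establishes sufficiency (``it is enough to access $f_1,f_2,f_3,f_4$''), your proposal already goes beyond it.

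The one genuine flaw is in your distinctness step. Coincidence of $f_b$ with an external edge at $a$, or of $f_b$ with $f_c$, does force a path edge ($\{a,b\}=\alpha$, respectively $\{b,c\}=\beta$), but coincidence of $f_c$ with an external edge at $a$ forces the chord $\{a,c\}$, which is \emph{not} one of $\alpha,\beta,\gamma$: it closes a triangle through $a,b,c$, and your ``short argument'' does not exclude it. Your patch --- ``the endpoints $a$ and $d$ of a disk are never adjacent in $\mathcal{G}$'' --- is asserted without proof and in any case targets the wrong pair: $a$--$d$ adjacency is the four-cycle degeneracy, whereas the configuration that would collapse your read count to $3$ is $a$--$c$ adjacency. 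Nor can you appeal to girth: the paper's Section 4 exhibits a disk assignment on $K_5$ whose $\mathcal{G}$ has girth $3$. The fact you actually need does hold for graphs built by the paper's construction from a simple $4$-regular $G$ --- an end vertex of a disk at $v$ is an out-edge $(v,x)$ and the far interior vertex is an in-edge $(u,v)$, so the only conceivable adjacencies are the complementary min/max pairing at $v$ (already excluded by how the disk is assembled) or a digon $u=x$ (impossible when $G$ is simple) --- but that verification is absent from your write-up. Note that such a degeneracy could only lower the count below $4$, so the sufficiency claim, which is all the paper proves, is untouched; it is your optimality statement that needs this repair.
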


\begin{proof}
  Let $D$ be a disk in $\mathcal{G}$. Figure \ref{Figure: disk}
  depicts the subgraph of $\mathcal{G}$ containing the disk $D$ and
  its adjacent edges.


\begin{figure}[h]
\begin{center}






  \includegraphics[scale=.65]{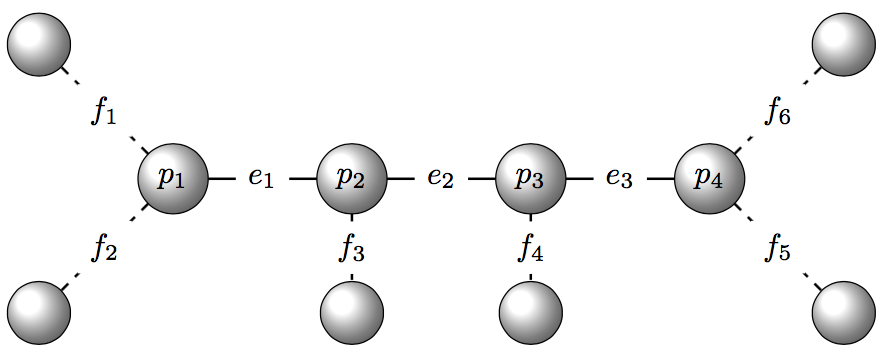}
 \caption{\label{Figure: disk} The disk $D$ with adjacent edges.}
 \end{center} 
\end{figure}

The associated parity check equations given by the graph are
  \begin{eqnarray*}
    \label{eq:1}
    e_1+f_1+f_2&=&0,\\
    e_1+e_2+f_3&=&0, \\
    e_2+e_3+f_4&=&0 \\
    e_3+f_5+f_6&=&0.
  \end{eqnarray*}

  These conditions are redundant as it is possible to recover $e_1$,
  $e_2$ and $e_3$ from any choice of 3 equations. In order to minimize
  the bandwidth we use the first three equations. Hence, in order to
  recover the disk $D$ it is enough to access $f_1$, $f_2$, $f_3$
  and $f_4$.
\end{proof}

The recovery procedure requires three rounds of computation. For instance,
in order to recover $e_3$ we need $e_2$, but in order to recover $e_2$
we need $e_1$. Note that if the disks are chosen such that the blocks
are not in a path, the required communication bandwidth would be higher. 

It is possible to consider the trade-off between the rounds of
communication and the bandwidth. One can either communicate four $\F_q$--symbols and use three computation rounds, or communicate five $\F_q$--symbols
and use two computation rounds.  The choice can be based on performance
of communication versus computation.

\begin{cor}
  If $n$ disconnected disks are erased, then $4n$ $\F_q$--symbols is required to recover the disks.
\end{cor}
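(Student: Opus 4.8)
The plan is to reduce the statement for $n$ disconnected disks to the single-disk bandwidth result already established in the preceding lemma. The key observation is the hypothesis that the erased disks are \emph{disconnected}, meaning that no two of the $n$ erased disks share a vertex of $\mathcal{G}$ (equivalently, their adjacent edges $f_1,\dots,f_6$ are disjoint across disks). Under this hypothesis the recovery of each disk is completely independent of the others, so the total bandwidth is simply additive.

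First I would make precise what ``disconnected disks'' means: the $n$ erased disks $D^{(1)},\dots,D^{(n)}$ are pairwise non-adjacent in $\mathcal{G}$, so the subgraph depicted in Figure~\ref{Figure: disk} for each $D^{(j)}$ involves six distinct neighboring edges $f_1^{(j)},\dots,f_6^{(j)}$, and none of these neighboring edges belongs to any other erased disk. Because the edges accessed to repair $D^{(j)}$ are precisely the unerased symbols $f_1^{(j)},f_2^{(j)},f_3^{(j)},f_4^{(j)}$, the disjointness guarantees that every symbol accessed during the repair of $D^{(j)}$ is itself a known (non-erased) symbol and that the repairs do not interfere.

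Next I would invoke the previous lemma, which shows that repairing a single disk requires exactly $4$ $\F_q$-symbols of communication via the first three parity-check equations. Applying this to each of the $n$ disks separately and summing over $j=1,\dots,n$ yields $4n$ $\F_q$-symbols. The disconnectedness is exactly the condition ensuring that the four accessed symbols for one disk are not shared with, or overwritten by, the repair of another disk, so there is no overcounting and no savings: the bandwidths add cleanly.

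The main obstacle, and the only point requiring care, is verifying that the independence of repairs genuinely follows from the disjointness of the disks in $\mathcal{G}$ — in other words, confirming that no accessed symbol $f_i^{(j)}$ coincides with an erased edge of a neighboring disk, which would force additional communication. I expect this to follow directly from the definition of ``disconnected,'' but it is worth stating explicitly, since if two erased disks \emph{were} adjacent, a shared neighboring edge could be erased and the clean additivity would fail. Granting the disjointness hypothesis, the corollary is an immediate consequence of the single-disk lemma applied $n$ times.
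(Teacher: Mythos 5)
Your argument is correct and matches the paper's (implicit) reasoning exactly: the paper states this corollary without proof, treating it as an immediate consequence of the preceding lemma, and your proposal --- applying the single-disk bandwidth lemma to each of the $n$ disks and using disconnectedness to guarantee that the accessed neighboring edges are unerased and the repairs are independent and additive --- is precisely the intended justification. Your explicit attention to why disjointness ensures no accessed symbol $f_i^{(j)}$ is itself erased is a worthwhile detail the paper omits.
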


\subsection{Recovery bound}

To determine an upper bound on the number of recoverable disk erasures of a DSS, we 
study the underlying $4$-regular graph.

\begin{defn}
 A $t$-disk cycle of $\mathcal{G}$ is 
 a cycle of $\mathcal{G}$ whose edges belong to $t$ disks.
\end{defn}

\begin{prop} \label{Prop: cycle}
The $3$-regular graph $\mathcal{G}$ contains a $t$-disk cycle if and only if the associated $4$-regular graph $G$ contains a cycle on $t$ vertices.
\end{prop}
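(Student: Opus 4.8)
The plan is to exploit the exact correspondence between the disks of $\mathcal{G}$ and the vertices of $G$. First I would record the structural facts underlying the construction: the $P_4$ decomposing $\mathcal{G}$ at a vertex $v$ of $G$ is the disk $D_v$ given by the path $\min\Out(v)-\min\In(v)-\max\In(v)-\max\Out(v)$, so that all three blocks (edges) of $D_v$ are incident, in $G_d$, to $v$. Conversely, reading a vertex $e=(v_\ell,v_m)$ of $\mathcal{G}$ as the edge $\{v_\ell,v_m\}$ of $G$, one checks from the adjacency rules that $e$ lies in exactly two disks, namely $D_{v_m}$ (as an interior vertex of its $P_4$) and $D_{v_\ell}$ (as an endpoint). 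The key consequence, which I would isolate as a lemma, is this: if two edges of $\mathcal{G}$ are consecutive along a cycle and meet at the vertex $e=\{v_\ell,v_m\}$, then either they belong to the same disk (and $e$ is an interior vertex of that disk's $P_4$) or they belong to the two disks $D_{v_\ell}$ and $D_{v_m}$ whose $G$-vertices are joined by the edge $e$.

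For the forward implication, given a cycle $w_1-w_2-\cdots-w_t-w_1$ in $G$, I would build a cycle in $\mathcal{G}$. Its consecutive edges $f_i=\{w_i,w_{i+1}\}$ are vertices of $\mathcal{G}$, and $f_{i-1},f_i$ both sit at $w_i$; I connect them by the unique path joining them inside the $P_4$ of $D_{w_i}$. A short case analysis on whether each of $f_{i-1},f_i$ is an in- or out-edge of $w_i$, together with the $\min/\max$ labelling, shows this connecting path uses one, two, or three blocks, all belonging to $D_{w_i}$. Concatenating these paths around the cycle produces a closed walk whose edges lie in exactly the $t$ disks $D_{w_1},\dots,D_{w_t}$. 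I would then verify this walk is a genuine (simple) cycle: the $f_i$ are distinct, any intermediate vertex used at $w_i$ is an in-edge of $w_i$ distinct from $f_{i-1},f_i$, and distinct vertices of $G$ have distinct in-edges, so no vertex of $\mathcal{G}$ is repeated. This yields a $t$-disk cycle.

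For the converse, given a $t$-disk cycle $C$ in $\mathcal{G}$, I would label each of its edges by the disk containing it and collapse maximal runs of edges lying in a common disk. By the lemma above, each collapse point is a vertex $e=\{v_\ell,v_m\}$ of $\mathcal{G}$ at which $C$ crosses from $D_{v_\ell}$ to $D_{v_m}$, i.e. traverses the edge $\{v_\ell,v_m\}$ of $G$. Hence the collapsed sequence is a closed walk $W$ in $G$ visiting exactly the $t$ vertices corresponding to the $t$ disks of $C$.

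The main obstacle is the last step of the converse: upgrading the closed walk $W$ to a simple cycle on exactly $t$ vertices. This is equivalent to showing that $C$ meets each disk in a single contiguous run, so that no disk-vertex of $W$ is visited twice. Since a disk is a $P_4$ with two interior and two end vertices, a simple cycle could a priori enter and leave the same disk in two separate pieces (for instance using the two outer blocks but not the middle one); ruling this out, or else extracting a simple $t$-cycle from $W$, is the crux. I would attack it by a careful analysis of how a simple cycle can pass through the four vertices of a single $P_4$, using that each interior vertex of a disk is shared with exactly one neighbouring disk. The forward orientation case analysis is routine, but must be checked carefully to confirm that all $t$ disks, and no others, actually occur.
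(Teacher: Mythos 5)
Your proposal follows the paper's proof in outline---the same dictionary (disks of $\mathcal{G}$ $\leftrightarrow$ vertices of $G$, vertices of $\mathcal{G}$ $\leftrightarrow$ edges of $G$), stitching through subpaths of the $P_4$s in one direction and collapsing disk-runs in the other---and in the ($\Leftarrow$) direction you are actually more careful than the paper. The paper asserts that the vertices $e_i$ and $e_{i+1}$ are joined ``via an edge that is a part of disk $v_{i+1}$,'' but these two vertices can sit at distance up to three in the $P_4$ of $D_{v_{i+1}}$, and the paper never checks that the resulting closed walk is simple; your in-edge bookkeeping (intermediate vertices of the connecting subpath are in-edges of $w_i$, and distinct vertices of $G$ have distinct in-edges) repairs both points correctly. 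Your structural lemma---each vertex $(v_\ell,v_m)$ of $\mathcal{G}$ is interior in $D_{v_m}$ and an endpoint of $D_{v_\ell}$, with its three incident edges split $2{+}1$ between these two disks---is accurate and is the right tool.

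The gap is where you say it is, and the branch you hoped for (``ruling this out'') is unavailable: a simple cycle really can meet a disk in two non-contiguous runs. With the paper's own disk assignment for $G=K_5$ in Section \ref{section:ex}, the simple $8$-cycle
\[
(v_1,v_2)-(v_5,v_1)-(v_4,v_5)-(v_3,v_4)-(v_1,v_4)-(v_3,v_1)-(v_5,v_3)-(v_2,v_3)-(v_1,v_2)
\]
uses the two outer edges of $D_1$ but not its middle edge $\{(v_5,v_1),(v_3,v_1)\}$, together with one edge of $D_5$, two of $D_4$, two of $D_3$, and one of $D_2$; its edges thus belong to all five disks, so it is a $5$-disk cycle, yet the collapsed walk in $K_5$ is $v_1\, v_5\, v_4\, v_1\, v_3\, v_2\, v_1$---a bowtie of two triangles sharing $v_1$, not a $5$-cycle. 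So contiguity fails, and your closed walk $W$ need not be upgradable to a simple cycle on exactly $t$ vertices. What the collapsing argument does give in general: since consecutive transition vertices within a disk-run are distinct edges of $G$, the walk $W$ is non-backtracking, hence contains a simple cycle on \emph{at most} $t$ vertices. This weakened ($\Rightarrow$), combined with your exact ($\Leftarrow$), is precisely what the girth corollary and Theorem \ref{4regthm} require, so the correct completion is cycle extraction, not case analysis (in the $K_5$ instance the literal biconditional survives only because $K_5$ happens to contain $5$-cycles). It is worth noting that the paper's own ($\Rightarrow$) silently assumes each disk of the cycle appears in a single contiguous run, so the flaw you honestly flagged is present, unacknowledged, in the published argument as well.
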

\begin{proof}
  ($\Leftarrow$) Suppose there is a cycle of $t$ vertices in the
  $4$-regular graph $G$, and let
    $v_1, \ldots, v_t$ be those vertices. Recall that  each vertex of
    $G$ corresponds to a disk in $\mathcal{G}$ and each edge
  of $G$ corresponds to a vertex in $\mathcal{G}$. Let $e_i$ be
    the edge between $v_i$ and $v_{i+1}$.  Then in $\mathcal{G}$,
  vertex $e_i$ is connected to vertex $e_{i+1}$ via an edge that is a
  part of disk $v_{i+1}$. Since $e_i$ is connected to $e_{i+1}$,
  $1 \leq i \leq t$, and $e_1$ is connected to $e_t$, we have a
  $t$-disk cycle in $\mathcal{G}$.
	
  ($\Rightarrow$) Suppose there exists a $t$-disk cycle $C$ in
  $\mathcal{G}$. Let $v_1, \ldots, v_t$ be the disks of such
  a cycle. Recall that each disk is a vertex of the $4$-regular
  graph $G$. If disks $v_i$ and $v_j$ are adjacent in $\mathcal{G}$,
  then edge $\{v_i,v_j\} \in E(G)$. It follows that vertices
    $v_1,\dots,v_t$ together with these edges form a cycle in $G$ of
    length $t$.
\end{proof}

We immediately notice a relationship between a minimal disk cycle of $\mathcal{G}$ and the girth of the $4$-regular graph $G$.

\begin{cor}
The smallest $t$ for which $\mathcal{G}$ has a $t$-disk cycle is $t=\mbox{girth}(G)$.
\end{cor}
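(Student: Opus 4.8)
The plan is to derive this corollary directly from Proposition~\ref{Prop: cycle}, which already establishes the two-way correspondence between $t$-disk cycles of $\mathcal{G}$ and $t$-vertex cycles of $G$. The only additional ingredient is the definition of girth given in the introduction, namely that the girth of $G$ is the length of the shortest cycle in $G$. Since a cycle $C_t$ on $t$ distinct vertices has length $t$, the girth of $G$ is precisely the smallest value of $t$ for which $G$ contains a cycle on $t$ vertices.

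First I would fix notation by writing $t^* := \min\{\, t : \mathcal{G} \text{ has a } t\text{-disk cycle}\,\}$ and $g := \operatorname{girth}(G)$, and then argue equality by comparing the two sets of admissible indices. By Proposition~\ref{Prop: cycle}, the statement ``$\mathcal{G}$ has a $t$-disk cycle'' is logically equivalent to ``$G$ has a cycle on $t$ vertices,'' so the set of integers $t$ realizing a $t$-disk cycle in $\mathcal{G}$ coincides exactly with the set of cycle lengths occurring in $G$. Taking the minimum of each side of this equality of sets yields $t^* = g$ immediately, since the minimum cycle length in $G$ is by definition its girth.

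Because the biconditional in Proposition~\ref{Prop: cycle} is an equivalence for each individual value of $t$ (not merely an existence statement), no separate inequality arguments in the two directions are needed; the equality of the two index sets does all the work at once. I would nonetheless spell out the two inclusions briefly for clarity: any shortest cycle of $G$, which has length $g$, produces a $g$-disk cycle in $\mathcal{G}$ by the ($\Leftarrow$) direction, giving $t^* \le g$; conversely, a minimal $t^*$-disk cycle of $\mathcal{G}$ produces a cycle on $t^*$ vertices in $G$ by the ($\Rightarrow$) direction, so $g \le t^*$.

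I do not anticipate a genuine obstacle here, as the corollary is essentially a restatement of Proposition~\ref{Prop: cycle} specialized to the minimal index. The only point requiring mild care is the bookkeeping identification of ``length of a cycle'' with ``number of vertices on the cycle,'' which is immediate for simple cycles but should be noted explicitly so that the definition of girth lines up cleanly with the $t$ appearing in the notion of a $t$-disk cycle.
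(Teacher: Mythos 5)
Your proposal is correct and matches the paper's intent exactly: the paper states this corollary without proof, treating it as an immediate consequence of Proposition~\ref{Prop: cycle}, and your argument---equating the set of indices $t$ admitting a $t$-disk cycle in $\mathcal{G}$ with the set of cycle lengths in $G$ and taking minima---is precisely the derivation the paper leaves implicit. Your added care in identifying cycle length with vertex count is a harmless and reasonable clarification.
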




Using the previous two results, we now show that a DSS with an underlying $4$-regular graph
of girth $g$ allows for the recovery of any $g-1$ disk erasures.
\begin{theorem} \label{4regthm}
Let $G$ be a $4$-regular graph with girth $g$. Then 
	\begin{itemize}
		\item any $g-1$ disk erasures can be recovered in the corresponding $3$-regular graph
$\mathcal{G}$, and 
		\item there exists an erasure pattern of $g$ disk erasures that cannot be recovered. 
	\end{itemize}
\end{theorem}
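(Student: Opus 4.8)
The plan is to reduce both statements to the cycle-based characterization of recoverability and then translate cycles in $\mathcal{G}$ into cycles in $G$ using Proposition~\ref{Prop: cycle} and its corollary. The central observation I would invoke repeatedly is that erasing a disk erases all three of its blocks, and since the $P_4$s partition the edges of $\mathcal{G}$, the set of erased blocks produced by a collection of erased disks is precisely the union of the edge sets of those disks, with each block belonging to a unique disk.

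For the first bullet I would argue by contradiction. Suppose a set of $g-1$ disks is erased but the erasure is not recoverable. By the corollary to Theorem~\ref{Theorem: ErasureCycle}, the erased blocks must then contain a cycle of $\mathcal{G}$. Every edge of this cycle is an erased block and hence lies in one of the $g-1$ erased disks, so the cycle is a $t$-disk cycle with $t \le g-1$. But the corollary to Proposition~\ref{Prop: cycle} asserts that the smallest $t$ for which $\mathcal{G}$ admits a $t$-disk cycle equals $\mbox{girth}(G) = g$, forcing $t \ge g$. This contradiction shows that no cycle of erased edges can arise, so by Theorem~\ref{Theorem: ErasureCycle} every block, and hence every disk, is recoverable.

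For the second bullet I would exhibit the bad pattern directly. Since $G$ has girth $g$, it contains a cycle on $g$ vertices $v_1, \dots, v_g$. By Proposition~\ref{Prop: cycle} this produces a $g$-disk cycle in $\mathcal{G}$ whose edges are distributed among the $g$ disks corresponding to $v_1, \dots, v_g$. Erasing exactly these $g$ disks erases every block lying on this cycle, so the cycle becomes a cycle of erased edges; Theorem~\ref{Theorem: ErasureCycle} then guarantees that none of its blocks is recoverable, giving the desired unrecoverable pattern of $g$ disk erasures.

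Because the disk-cycle/girth dictionary has already been established, the argument is largely bookkeeping. The one point that warrants care, and the closest thing to an obstacle, is keeping the counting in the first bullet tight: I must confirm that a cycle contained in the union of $g-1$ erased disks genuinely registers as a $t$-disk cycle with $t \le g-1$, with no disk counted twice or omitted. This is exactly where the fact that the $P_4$ decomposition assigns each block to a single disk is used, so I would state that assignment explicitly before invoking the corollary to Proposition~\ref{Prop: cycle}.
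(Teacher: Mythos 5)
Your proof is correct and follows essentially the same route as the paper's: a contradiction via Theorem~\ref{Theorem: ErasureCycle} and the corollary to Proposition~\ref{Prop: cycle} for the first bullet, and the explicit $g$-disk cycle arising from a girth cycle of $G$ for the second. Your added remark that the $P_4$ decomposition assigns each block to a unique disk, making the bound $t \leq g-1$ airtight, is a detail the paper leaves implicit but changes nothing in substance.
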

\begin{proof}
Suppose $g-1$ disks have been erased and that one of the disks is not recoverable. By Theorem \ref{Theorem: ErasureCycle}, there is a cycle containing a portion of this disk that has been erased. This cycle is then a $t$-disk cycle, with $t \leq g-1$, which is a contradiction according to Proposition \ref{Prop: cycle}. Hence, the $g-1$ disks can be recovered.

To show there exists an erasure pattern of $g$ disk erasures that cannot be recovered, consider the cycle of $G$ with length $g=\mbox{girth}(G)$. Consider the corresponding $g$-disk cycle of the $3$-regular graph $\mathcal{G}$. Erasure of these $g$ disks would lead to an unrecoverable pattern of erasures.
\end{proof}

\subsubsection*{Optimality of cage graphs.}
According to Theorem \ref{4regthm}, the number of recoverable disk
erasures depends only on the girth of the $4$-regular graph. Hence, to
maximize the amount of recoverable data, we consider the $4$-regular
graphs $G$ with the minimum number of nodes given a fixed girth
$g$. Such graphs are called cage graphs. 

More precisely, a
$(k,g)$-cage graph $G$ is a $k$-regular graph of girth $g$ with the
minimum possible number of nodes. The study of cage graphs was initiated by Tutte \cite{Tutte}. Although it is known that $(k,g)$-cage graphs exist for all $k \geq 2$ and $g\geq 3$, few constructions are known. Some results are known for $k=4$, which leads us to consider
distributed storage systems from $(4,g)$-cage graphs. Specifically, the complete graph $K_5$ is a $(4,3)$-cage while the complete bipartite graph is a $(4,4)$-cage \cite{wong}; the Robertson graph which is a graph on $19$ vertices is a $(4,5)$-cage \cite{Rob}; the 
point-line incidence graph of $PG(2,3)$, which is a graph on $26$ vertices is a $(4,6)$-cage; 
and there is a $(4,7)$ cage with $67$ vertices \cite{Exoo47}. See \cite{CageSurvey}  for a survey of progress on this open problem. 

In Table \ref{tab:table1}, we consider the codes $C_E(\mathcal{G})$ where $\mathcal{G}$ is obtained from a $(4,g)$-cage graph $G$. If $G$ has $N$ vertices, then $\mathcal{G}$ has $3N$ edges and $2N$ vertices. Hence,  $C_E(\mathcal{G})$ is a $[3N, N,g]$ code. 

\begin{table}[h!]
  \centering
  \begin{tabular}{|c|c|c|c|c|}
  \hline
   \# disks & \# blocks  & \# disks recoverable & \# blocks recoverable & code parameters\\
   \hline
   5 & 15 & 2 & 6 & [15, 6, 3] \\
   \hline 
   8 & 24 & 3 & 9 & [24, 9, 4] \\
   \hline
   19 & 57 & 4 & 12 & [57, 20, 5] \\
   \hline
   26 & 78 & 5 & 15 & [78, 27, 6] \\
   \hline
   67 & 201 & 6 & 18 & [201, 68, 7]\\
   \hline
  \end{tabular}
\caption{}
  \label{tab:table1}
\end{table}

For purposes of comparison, we mention the best known comparable binary linear codes \cite{Grassl} (based on the initial version of Brouwer's tables \cite{Brouwer}) meaning codes with the largest minimum distance among all known codes with the same length and dimension. These are $[15, 6, 6]$, $[24, 9, 8]$, $[57, 20, 16 \leq d \leq 18]$, $[78,27,20 \leq d \leq 26]$, and $[201, 68, 40 \leq d \leq 62]$. Keep in mind that while an $[n,k,d]$ code can correct any $\left\lfloor \frac{d-1}{2} \right\rfloor$ errors, it does so with access to the entire received word (meaning access to every node) as opposed to using just local information from selected nodes.

Recall that given a connected graph $G$ with $m$ edges on $n$ vertices with a girth of $g$, the code $C_E(G)$ has rate $\frac{m-n+1}{m} = 1-\frac{n-1}{m}$. Hence, for any $3$-regular graph with $n$ vertices, the rate is the function $R(n) = 1- \frac{n-1}{3n/2}$. Any $3$-regular graph constructed from the above $4$-regular graph on five vertices has a rate of $\frac{2}{5}$ and can recover any two erasures. Hence, the distributed data storage systems based on any of the resulting $3$-regular graphs allows for the same rate and the recovery of the same number of disk erasures. Note that $R(n)$ is a decreasing function with a lower bound of $\frac{1}{3}$. Hence, the rate improves by decreasing the number of vertices, which decreases the number of disks. As a result, the best data distributed storage system in terms of rate is the system consisting of five disks. 

\section{Example: The System with 5 Disks} \label{section:ex}

\begin{figure}[htb]
  \centering

  \begin{subfigure}[b]{0.32\textwidth}
    \centering
			
			
			

			
			
			

     \includegraphics[scale=.65]{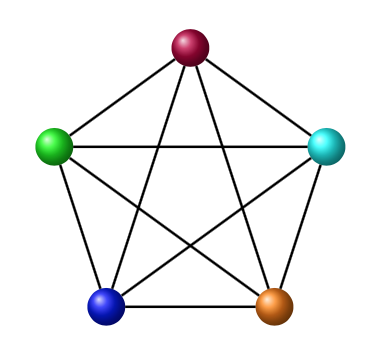}
    \caption{Graph $G$ \label{Figure: K_5a}}
    \label{fig:AA}
  \end{subfigure}
  \begin{subfigure}[b]{0.32\textwidth}
    \centering
    
			
			
			

			
			
			
			
			
                      \includegraphics[scale=.65]{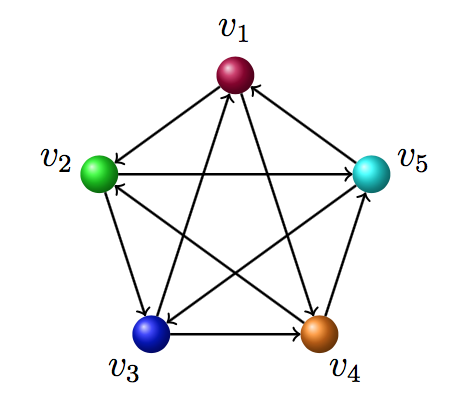}
			\caption{\label{Figure: K_5b} Graph $G_d$}
    \label{fig:BB}
  \end{subfigure}
  \begin{subfigure}[b]{0.32\textwidth}
    \centering
\includegraphics[scale=.68]{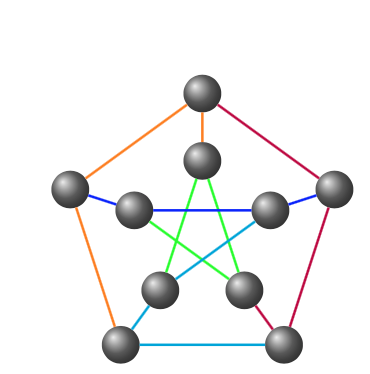}
    \caption{Graph $\mathcal{G}$}
    \label{fig:CC}
  \end{subfigure}

\caption{Graph relations.\label{Figure: K5}}
\end{figure}


 
 Suppose we want a distributed data storage system with
  five disks. Then we consider the $4$-regular graph $G$ on five vertices,
  which is the complete graph $K_5$. Figure \ref{fig:BB} depicts an Eulerian
  tour on $G$. The edges of the
  resultant directed graph $G_d$ are
	\[ E = \{(v_1,v_2),(v_1.v_4),(v_2,v_3),(v_2,v_5),(v_3,v_1),(v_3,v_4),(v_4,v_2),(v_4,v_5),(v_5,v_1),(v_5,v_3)\},\]
which then gives the vertices of the associated $3$-regular graph $\mathcal{G}=(\mathcal{V}, \mathcal{E})$, meaning 
$\mathcal{V} = E$. The disks which are disjoint $P_4$ paths of $\mathcal{G}$ can be chosen to be
	\begin{align*}
		& D_1: (v_1,v_2) - (v_5,v_1) - (v_3,v_1) - (v_1,v_4)\\
		& D_2: (v_2,v_3) - (v_1,v_2) - (v_4,v_2) - (v_2,v_5) \\
		& D_3: (v_3,v_1) - (v_5,v_3) - (v_2,v_3) - (v_3,v_4) \\
		& D_4: (v_4,v_2) - (v_1,v_4) - (v_3,v_4) - (v_4,v_5) \\
		& D_5: (v_5,v_1) - (v_4,v_5) - (v_2,v_5) - (v_5,v_3) 
	\end{align*}
resulting in the Petersen graph, which has girth 5.

        Since the $4$-regular graph $G$ has  girth 3, from Theorem
        \ref{4regthm} any two erased disks can be recovered. Hence, the 
        constructed Petersen graph 
        allows for the recovery of any four block erasures and any two disk erasures.

        Note that having a 3-disk cycle does not imply that
        $\mathcal{G}$ has girth 3.
        More importantly, recall the minimum disks cycle is a
        3-disk cycle due to the construction of $\mathcal{G}$
        and hence is independent of the resulting girth. The girth of
        the constructed $3$-regular graph $\mathcal{G}$ depends on the
        choice of the placement of vertices in the disk.  In this
        example, the resulting girth has a lower bound of 3 and an
        upper bound of 5. 

A graph with a girth of $3$ can be obtained by choosing the following paths:
	\begin{align*}
		& D_1: (v_1,v_2) - (v_3,v_1) - (v_5,v_1) - (v_1,v_4)\\
		& D_2: (v_2,v_3) - (v_1,v_2) - (v_4,v_2) - (v_2,v_5) \\
		& D_3: (v_3,v_1) - (v_2,v_3) - (v_5,v_3) - (v_3,v_4) \\
		& D_4: (v_4,v_2) - (v_1,v_4) - (v_3,v_4) - (v_4,v_5) \\
		& D_5: (v_5,v_1) - (v_2,v_5) - (v_4,v_5) - (v_5,v_3) 
	\end{align*}
The associated system allows for a recovery of up to any two erasures, the same number as 
the Petersen graph.

With this example, we see that the girth of the resultant graph does not play a role in the number of recoverable erasures, 
and that it is independent of the rate of the code based on the graph.


\section{Conclusions} \label{section:conclusion}

In this paper, we construct distributed data storage systems from
3-regular graphs. The obtained DSSs have the capability to recover any
maximal number of erased disks through the connectivity of the
underlying graph by a sequential decoder. Due to the correspondence
between the distributed storage disks and the paths of the graph, we are
able to keep a small communication bandwidth.


%

\newcommand*{\innervertices}[5]{
  \drawvertices{#1}{#2}{#3}{#4}{#5}{\RA}{a}
}

\newcommand*{\outervertices}[5]{
  \drawvertices{#1}{#2}{#3}{#4}{#5}{\RB}{b}
}

\newcommand*{\staredges}[5]{
  \foreach \u/\v/\edgecolor/\edgelabel in {1/3/#1/15, 3/5/#2/12, 5/2/#3/11, 2/4/#4/14, 4/1/#5/13} {
    \tikzstyle{EdgeStyle} = [color = \edgecolor]
    \Edge[label = $\edgelabel$](a\u)(a\v)
  }
}

\newcommand*{\outeredges}[5]{
  \foreach \u/\v/\edgecolor/\edgelabel in {1/2/#1/5, 2/3/#2/4, 3/4/#3/3, 4/5/#4/2, 5/1/#5/1} {
    \tikzstyle{EdgeStyle} = [color = \edgecolor]
    \Edge[label = $\edgelabel$](b\u)(b\v)
  }
}

\newcommand*{\betweenedges}[5]{
  \foreach \u/\edgecolor/\edgelabel in {1/#1/10, 2/#2/9, 3/#3/8, 4/#4/7, 5/#5/6} {
    \tikzstyle{EdgeStyle} = [color = \edgecolor]
    \Edge[label= $\edgelabel$](a\u)(b\u)
  }
}


\end{document}